\newcommand{\setN}{\mathcal{N}}
\newcommand{\setL}{\mathcal{L}}
\newcommand{\setS}{\mathcal{S}}
\newcommand{\setA}{\mathcal{A}}
\newtheorem{definition}{Definition}
\newtheorem{proposition}{Proposition}
\newcommand{\revise}[1]{{\color{black}#1}}
\title{\LARGE \bf
Routing and charging game in ride-hailing service with electric vehicles
}
\author{Kenan Zhang$^{1}$ and John Lygeros$^{2}$% <-this % stops a space
% \thanks{*This work was not supported by any organization}% <-this % stops a space
\thanks{$^{1}$School of Architecture, Civil and Environmental Engineering, EPFL, Switzerland ({\tt\small kenan.zhang@epfl.ch}).}
\thanks{$^{2}$Department of Information Technology and Electrical Engineering, ETH Z\"urich, Z\"urich, Switzerland ({\tt\small jlygeros@ethz.ch}).}
}
\begin{document}

\maketitle
\thispagestyle{empty}
\pagestyle{empty}

%%%%%%%%%%%%%%%%%%%%%%%%%%%%%%%%%%%%%%%%%%%%%%%%%%%%%%%%%%%%%%%%%%%%%%%%%%%%%%%%
\begin{abstract}
This paper studies the routing and charging behaviors of electric vehicles in a competitive ride-hailing market. 
When the vehicles are idle, they can choose whether to continue cruising to search for passengers, or move a charging station to recharge. The behaviors of individual vehicles are then modeled by a Markov decision process (MDP). The state transitions in the MDP model, however, depend on the aggregate vehicle flows both in service zones and at charging stations. Accordingly, the value function of each vehicle is determined by the collective behaviors of all vehicles. With the assumption of the large population, we formulate the collective routing and charging behaviors as a mean-field Markov game. We characterize the equilibrium of such a game, prove its existence, and numerically show that the competition among vehicles leads to ``inefficient congestion" both in service zones and at charging stations. 
\end{abstract}

%%%%%%%%%%%%%%%%%%%%%%%%%%%%%%%%%%%%%%%%%%%%%%%%%%%%%%%%%%%%%%%%%%%%%%%%%%%%%%%%
\section{INTRODUCTION}
% growth of EV market
The past decade has witnessed rapid growth in the electric vehicle (EV) market. 
\revise{In 2021, global EV sales increased by 109\%, more than doubled in 2020,}
% Even under the impact of the COVID-19 pandemic, global EV sales still increased by 46\% in 2020, more than doubled in 2021, 
and maintained a considerable growth rate of 55\% in 2022 \cite{ev2022ev}. 
Governments and local authorities play a critical role in promoting the adoption of EVs. In 2021, the US announced its target of 50\% share of EVs in new car sales by 2023, along with a funding package of \$7.5 billion for charging infrastructure \cite{bibra2022global}. Early this year, EU gave the final approval to end the sale of fossil fuel vehicles by 2035 \cite{eu2022zero}. 

% EV model charging + routing
Envisioning the wide adoption of EVs, \revise{recent research has been} 
% a great amount of research efforts have been 
devoted to modeling their coupled charging and routing behaviors in the transportation system \cite{he2014network}. The early studies mostly focus on the planning of charging stations \cite{wang2013traffic,he2013optimal}. On the other hand, studies at the operational level often take the perspective of grid operators. To induce a desirable EV charging pattern, the grid operator either directly optimizes the charging price \cite{qian2020enhanced}, or indirectly designs the power generation plan, which then gives the charging price via the locational marginal pricing (LMP) mechanism \cite{alizadeh2016optimal,manshadi2017wireless}. 
\revise{All these studies} 
% All of the studies mentioned above 
target private EVs and assume they must recharge at least once during their trips. However, the current battery capacity of most EV models is more than enough for a single trip within the city. A recent report also shows that most private EVs charge at home or at work~\cite{virta}. 
% Hence, the routing and charging behaviors of private EVs in the urban context may no longer be a valid research problem. 
In contrast, routing and charging are indeed important decisions for EVs in ride-hailing services (e.g., taxi and ride-sourcing). 
These vehicles usually travel for a much longer distance every day compared to private vehicles~\cite{schaller2017empty}. 
% For instance, the average daily miles traveled by taxis in New York City is beyond 150 miles~\cite{schaller2017empty}. 
Moreover, ride-hailing vehicles are driven by travel demand and move across different regions in the city. Hence, they are more likely to charge at public charging stations. 
% instead of going back home or to some dedicated charging locations. 
This routing and charging problem is expected to be more prevailing given the increasing popularity of shared mobility services \cite{mckinsey2021shared} and the fact that major players (e.g., Uber and Lyft) are electrifying their fleets \cite{Khurana2021ride}. 
Research on this topic, however, is still rare in the literature, with most existing studies assuming that the vehicle routing and charging are centrally controlled by the service operator \cite{rossi2019interaction,jamshidi2021dynamic,santos2023space}. Two exceptions are \cite{qin2020piggyback} and \cite{ding2022optimal}, both of which consider the case where EVs provide a transport service by ride-hailing and an energy service by vehicle-to-grid. Yet, both models are static and ignore charging behaviors.

% introduction
In this paper, we \revise{extend our previous work \cite{zhang2023ride} to study} the electric ride-hailing vehicle routing problem (eRIVER) in a spatiotemporal market. Specifically, we assume that each vehicle makes routing and charging decisions to maximize its own profit over a single-day operation. We model each vehicle's behaviors by a Markov decision process (MDP) and carefully design the state transition functions to reflect the physical interactions among vehicles in each service zones and at each charging station. We then show that the value function of each vehicle depends on the collective behaviors of all vehicles. When the fleet size is sufficiently large, the impact of each vehicle on the aggregate vehicle flows is negligible. This motivates us to formulate collective vehicle behaviors as a mean-field game. 
In the remainder of this paper, we first present the eRIVER model, define the equilibrium and discuss its existence, then conduct numerical experiments to explore the equilibrium vehicle flows in a stylized network market.

\section{THE ERIVER MODEL}
Consider a spatiotemporal ride-hailing market that is discretized into $N$ zones containing $L$ charging stations, and $T$ time steps with equal length $\Delta$. 
For simplicity, we assume all charging stations are homogeneous with capacity $C$ and charging efficiency $e$. 
Let $\setN$ and $\setL$ be the sets of service zones and charging stations, respectively. The travel time from zone $i\in\setN$ to zone $j\in\setN$ (station $l\in\setL$) is denoted by $\tau^t_{ij}$ ($\tau^t_{il}$) and is measured in units of $\Delta$. Similarly, the vehicle charging time is measured in units of $\Delta$ as well. 
Consider a fleet of $M$ self-interested and homogeneous electric vehicles \revise{operating} in the market, each with battery capacity $B$. To be consistent with other variables, the state of charge (SOC) is also discretized and measured in units of $\Delta$ and the battery consumption rate is $\xi$ per time step. 
We further assume vehicles \revise{do not} leave the charging station until they are fully charged. Hence, a vehicle that starts charging at SOC $b=0,\dots,B-1$ will leave the charging station after $\hat{\tau}_b = \lceil(B-b)/e \rceil$ time steps. 

\revise{While the fleet size $M$ is finite, we consider it is sufficiently large such that the aggregate behaviors of vehicles can be represented by continuous flows. This is a common assumption in transportation research~\cite{sheffi1985urban} and closely related to mean-field game~\cite{lasry2007mean}, the game theoretical framework adopted in eRIVER. Specifically, two vehicle flows are defined as follows:
}
% eRIVER is formulated as a flow-based model where the number of vehicles in each zone (at each station) in each time step is represented by a continuous variable. 
% Two vehicle flows are formally defined below:
\begin{itemize}
    \item $y^t_{i,b}$: idle vehicles in zone $i$ at time $t$ with SOC $b$.
    % \item $\nu^t_i$: flow of occupied vehicles dropping passengers in zone $i$ at time $t$
    \item $z^t_{l,b}$: vehicles arriving at station $l$ at time $t$ with SOC $b$.
    % \item $x^t$: vehicles that are not in the previous two cases at time $t$ (e.g., en-route to deliver a passenger/to a charging station, waiting/charging at a station)
\end{itemize}

% Let $M$ be the fleet size, then the flow conservation reads $M=x^t + \sum_{i,b} y^t_{i,b} + \sum_{l,b}z^t_{l,b}$. 

\subsection{Matching in zones} 
Let $q^t_i$ be the number of passengers arriving in zone $i$ at time $t$. Assume passengers can only be matched with vehicles in the same zone, and they leave the ride-hailing market for an alternative travel mode after one period of wait.  
\revise{Meanwhile, each idle vehicle can only be matched with one passenger (i.e., no ride-pooling).}
Then, for each idle vehicle in the same zone, the probability of successfully picking up a passenger after one period of search is given by 
\begin{align}
    m^t_i = f(q^t_i, y^t_i; \theta^t_i),\label{eq:meet-prob}
\end{align}
where $y^t_i = \sum_{b>0} y^t_{i,b}$ and $\theta^t_i$ denotes the time and location-specific parameter (e.g., road density, travel speed). In what follows, $m^t_i$ is referred to as the \emph{meeting probability}. To ensure the model is realistic,  we assume that the function $f$ increases with passenger demand ($\mathrm{d}f/\mathrm{d} q \geq 0$) and decreases with vehicle supply  ($\mathrm{d} f/\mathrm{d} y \leq 0$). 
% As shown in \cite{zhang2019efficiency}, its functional form also largely depends on the matching mechanism. 
% \cite{zhang2021ride} further derive $f$ in the case of street-hailing and e-hailing as continuously differentiable functions. 

Accordingly, after one period of search, $m^t_i$ of the idle vehicles in zone $i$ successfully pick up a passenger. They will start delivering their trips from the next time step and become idle again after several time steps (depending on the trip duration). The other vehicles are going to make a new routing and charging decision in zone $i$ at time $t+1$.

\subsection{Charging at stations}
When a vehicle arrives at a charging station, some vehicles could already be there, either charging or waiting to charge.
Besides, there could be a group of other vehicles arriving at the same time. Hence, the waiting time till the vehicle starts charging\revise{, denoted by $\omega$,} is probabilistic depending on both vehicles arriving at time $t$ and those arriving earlier, which are collectively represented by $\mathbf{z}_{\leq t,l}=\{z^{t'}_{l,b}\}_{t'\leq t, b}$. 
\revise{The probability mass function of the waiting time $\omega$ for a vehicle arriving at station $l$ at time $t$ is characterized as}
\begin{align}
    w^t_l(\omega) = g(\mathbf{z}_{\leq t, l}, \omega).\label{eq:wait-dist}
\end{align}
By definition, $w^t_l(\omega)\in[0,1],\forall \omega\in\mathbb{N}$ and $\sum_\omega w^t_l(\omega) = 1$.
Accordingly, a newly arrived vehicle with SOC $b$ first waits for $\omega$ time steps with probability $w^t_l(\omega)$ and then charges for $\hat{\tau}_b$ time steps before leaving with a full battery.

% CONTINUE FROM HERE

\subsection{MDP for individual vehicle}
Now, we are ready to specify the finite-horizon Markov decision process (MDP) for each vehicle. 

\subsubsection{State $s\in\setS$}  
Given by the time step $t$, location $k\in\setN\cup\setL\cup\{0\}$, and SOC $b$. Here, location $0$ denotes an offline state when the vehicle runs out of battery.

\subsubsection{Action $a\in\setA$} 
The set of feasible actions depends on the vehicle's state. 
When the vehicle is in zone $i$, it can choose either to continue cruising or move to a charging station. Hence, the action set is given by $\setN_i\cup\setL$, where $\setN_i$ includes zone $i$ itself and its neighbor zones. 
When the vehicle is about to leave station $l$, it can choose one of the neighbor zones for cruising and thus the action belongs to the set of neighbor zones of station $l$, denoted by $\setN_l$. 
Once the vehicle runs out of battery, it can do nothing but stay offline.

\subsubsection{State transition $P:\setS\times\setA\times\setS\to[0,1]$}
The probability of transitions between every pair of states under each action.
The meeting probability $m^t_i$ and the waiting time distribution $w^t_l$ play a critical role here. For instance, a transition from state $s=(t, i, b)$ to state $s'=(t+1,j,b-1)$ means the vehicle starting from zone $i$ fails to find a passenger in zone $j$ and thus the transition probability is given by $P(s'|s,a)=1-m^t_j$. If the same vehicle picks up a passenger traveling from zone $j$ to zone $k$ and the vehicle has sufficient battery to finish the trip, then $s'=(t+1+\tau_{jk},k,b-1-\tau_{jk})$ and $P(s'|s,a) = m^t_j\alpha^t_{jk}$, where $\alpha^t_{jk}$ denote the fraction of passengers in zone $j$ traveling to zone $k$ at time $t$. After making a charging decision, the vehicle first travels to the station and then waits for a while if there is a queue. If the waiting time is $\omega$, the next state becomes $s'=(t+\tau_{il}+\omega+\hat{\tau}_{b-\tau_{il}},l,B)$. The corresponding transition probability is $P(s'|s,a) = w^{t+\tau_{il}}_l(\omega)$. 

% (full paper)
Note that we restrict the action set of cruising to the current and neighbor zones because idle ride-hailing vehicles continuously cruise before picking up a passenger. Even if they have a clear search target, the path can be decomposed into a sequence of local cruising destinations. For the same reason, we do not specify the cruising time between zones.
Besides, vehicles only need to may decisions when they are idle and when they just finish charging, thanks to the fixed travel time and changing efficiency. Hence, the state vector does not include vehicle operation status and the state transitions are not synchronized.

\subsubsection{Reward $r:\setS\times\setA\times\setS\to\mathbb{R}$} The immediate reward associated with each state transition.
Its value is non-zero in three cases: (i) picking up a passenger, (ii) starting to charge, (iii) being offline. The first case induces a positive reward $p_{jk}$ defined as the trip fare between zones $j$ and $k$, while the other two lead to a negative reward. The reward in (ii) is $c_l\hat{\tau}_b$ where $c_l$ is the charging price per unit of time. That in (iii) is an arbitrarily large penalty $\kappa$ for running out of battery during the day.

\subsubsection{Discount factor $\gamma\in(0,1]$}
express how much future rewards are taken into consideration in the current time step. In this study, $\gamma$ is set to be 1 because we study a single-day operation. The notation is thus omitted in the equations hereafter for simplicity.

The objective of each vehicle is to maximize its cumulative reward over time under a policy $\pi:\setS\to\Delta(\setA)$ and initial state distribution $\rho$, which is given by
\begin{align}
    &V_\rho(\pi|\mathbf{y},\mathbf{z}) = \nonumber\\
    &\mathbb{E}_{{s_0}\sim\rho}\left[\mathbb{E}_{a\sim\pi(\cdot|s), (s,a,s')\sim P(\cdot|\mathbf{y},\mathbf{z})}\left[\left.\sum_{(s,a,s')} r(s,a,s')\right|s_0\right]\right].\label{eq:value}
\end{align}
The key difference of the value function \eqref{eq:value} from a classic MDP is the conditioning on the aggregate vehicle flows $\mathbf{y}=\{y^t_{i,b}\}_{t,i,b}$ and $\mathbf{z}=\{z^t_{l,b}\}_{t,l,b}$. When these are fixed, one can solve for the optimal policy via dynamic programming. However, as will be shown in the next section, $\mathbf{y}$ and $\mathbf{z}$ are induced by the mean policy among the vehicles. Therefore, vehicles are not solving independent MDP problems but playing a mean-field Markov game.

\subsection{Mean-field equilibrium of eRIVER}
Since the vehicles are homogeneous and the fleet size is sufficiently large, we may use the mean policy among all vehicles to represent their collective routing strategies. Let $\delta(\pi)$ be the probability density function of policies among the vehicles and $\Omega$ be the set of feasible policies, the mean policy is given by
\begin{align}
    \bar{\pi} = \int_{\pi\in\Omega} \pi \delta(\pi)\mathrm{d}\pi. 
\end{align}
Note that vehicles do not necessarily share the same policy even though they are homogeneous. 
% Let $\bar{\pi}$ denote the mean policy among vehicles. 
In what follows, we show how the vehicle flows $\mathbf{y}$ and $\mathbf{z}$ can be fully determined by $\bar{\pi}$ and the initial vehicle distribution $\rho$. To this end, we introduce $x^t_{k,b},k\in\setN\cup\setL$ to denote the vehicle flow in each zone and at each station before making the next routing and charging decision. Accordingly, we have
\begin{align}
    y^t_i &= \sum_{k\in\setN_i} \sum_{b>0} x^t_{k,b} \bar{\pi}(a=i|s=(t,k,b))\label{eq:y-fun}\\
    &\quad + \sum_{k\in\setL_i} x^t_{k,B} \bar{\pi}(a=i|s=(t,k,B)),\nonumber\\
    z^t_l &= \sum_{k\in\setN} \sum_{b\geq \tau_{kl}} x^{t-\tau_{kl}}_{k,b} \bar{\pi}(a=l|s=(t-\tau_{kl},k,b). \label{eq:z-fun}
\end{align}
On the other hand, $x^t_{k,b}$ can also be written as a function of $\mathbf{y}$ and $\mathbf{z}$. Specifically, for $k\in\setN$, 
\begin{align}
    x^t_{k,b} =  (1-m^{t-1}_k) y^{t-1}_{k,b+1}   + \sum^t_{t'=1}\sum_{i\in\setN(k,t,t')} \alpha^{t'}_{ik}m^{t'}_i y^{t'}_{i,b+\tau_{ik}}, \label{eq:x-func-zone}
\end{align}
where $\setN(k,t,t') =\{i\in\setN: t=t'+1+\tau_{ik}\}$. Likewise for $k\in\setL$, 
\begin{align}
    x^t_{k,B} = \sum_b \sum_{\omega} w^{t-\omega-\hat{\tau}_b}_k(\omega) z^{t-\omega-\hat{\tau}_b}_{k,b}.\label{eq:x-func-station}
\end{align}
Finally, the initial vehicle flows are determined by the initial state distribution, i.e., $x^0_{k,b} = M\rho(s=(0,k,b))$. 

Accordingly, we introduce a mapping $\mu$ such that $(\mathbf{y},\mathbf{z})=\mu(\bar{\pi})$ and define the mean-field equilibrium as follows.

\begin{definition}[Mean-field equilibrium in eRIVER]
A mean policy $\bar{\pi}^*$ is called a mean-field equilibrium (MFE) of eRIVER if it satisfies
\begin{align}
    \bar{\pi}^* \in \arg\max_\pi V_\rho(\pi|\mu(\bar{\pi}^*)). \label{eq:MFE}
\end{align}
\end{definition}

Classically, MFE for a Markov game is usually defined on a tuple of stationary policy and state (or state-action) distribution. See, for example, \cite{guo2019learning, light2022mean}. The notion of stationarity is required due to the infinite-horizon setting. Since eRIVER is modeled in a finite horizon, the mean-field distribution is determined by the mean policy. Besides, all information required for the routing problem is expressed by the aggregate flow $(\mathbf{y},\mathbf{z})$, which can also be seen as an integrated version of state-action distribution when normalized by the fleet size $M$. 

% The class of large-population games
% investigated via the MF approach has the following features: each agent is affected by the average interaction of all the other agents, while the individual influence of each agent is negligible.

% This is also called general mean-field game (GMFG) in \cite{guo2019learning} to distinguish from classic MFG where the mean-field distribution is only defined on agents' states. 
Under mild conditions, one can show that an MFE of eRIVER is guaranteed to exist. 

\begin{proposition}[Existence of equilibrium in eRIVER]\label{prop:exist}
If $f$ and $g$ are continuous in $(\mathbf{y},\mathbf{z})$, there exists at least one MFE for the eRIVER problem. 
\end{proposition}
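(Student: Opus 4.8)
The plan is to recast the equilibrium condition \eqref{eq:MFE} as a fixed point of a best-response correspondence on the space of mean policies and then invoke Kakutani's fixed-point theorem. Because the horizon is finite and the sets $\setN$, $\setL$, and the admissible SOC levels are all finite, the state space $\setS$ and action space $\setA$ are finite; hence the set of mean policies $\Pi=\prod_{s\in\setS}\Delta(\setA(s))$ is a non-empty, compact, convex subset of a Euclidean space, where $\setA(s)$ denotes the feasible actions at $s$. This is the domain on which the fixed point is sought, and $\bar\pi$ indeed ranges over $\Pi$ since any mean policy is a convex combination of elements of $\Pi$.

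The first substantive step is to show that the flow map $\mu:\bar\pi\mapsto(\mathbf{y},\mathbf{z})$ is well defined and continuous. Although \eqref{eq:y-fun}--\eqref{eq:x-func-station} couple $\mathbf{x}$, $\mathbf{y}$, and $\mathbf{z}$ together, the coupling is \emph{causal}: since every travel time and charging duration is at least one period, the right-hand sides of \eqref{eq:x-func-zone} and \eqref{eq:x-func-station} involve only flows strictly earlier than $t$, while \eqref{eq:y-fun}--\eqref{eq:z-fun} express $y^t,z^t$ through $x^{t'}$ with $t'\le t$. Starting from the fixed initial flows $x^0_{k,b}=M\rho(s=(0,k,b))$, one can therefore solve the system forward in $t$, which shows $\mu$ is single-valued. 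Each forward step composes the flow relations \eqref{eq:y-fun}--\eqref{eq:x-func-station} (in which $\bar\pi$ appears) with the meeting probability \eqref{eq:meet-prob} and the waiting-time law \eqref{eq:wait-dist}; under the hypothesis that $f$ and $g$ are continuous in $(\mathbf{y},\mathbf{z})$, each step is continuous in $\bar\pi$ and the earlier flows, so the finite composition $\mu$ is continuous.

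Next I would analyze the best response. For a fixed $(\mathbf{y},\mathbf{z})$ the reward values in \eqref{eq:value} are unchanged and only the transition kernel $P(\cdot|\mathbf{y},\mathbf{z})$ varies, so the problem is an ordinary finite-horizon MDP solvable by backward dynamic programming. The resulting optimal action values $Q^*(s,a;\mathbf{y},\mathbf{z})$ are continuous in $(\mathbf{y},\mathbf{z})$, being obtained from the continuous kernel by finitely many affine operations and maxima. Rather than using the full maximizer set $\arg\max_\pi V_\rho(\pi|\mathbf{y},\mathbf{z})$ --- which need not be convex, since a policy may place arbitrary mass on states unreachable under it --- I would work with the refined correspondence $\Phi(\mathbf{y},\mathbf{z})=\{\pi\in\Pi:\operatorname{supp}\pi(\cdot|s)\subseteq\arg\max_a Q^*(s,a;\mathbf{y},\mathbf{z})\ \forall s\}$. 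Every $\pi\in\Phi(\mathbf{y},\mathbf{z})$ is optimal, so a fixed point of $\Phi\circ\mu$ satisfies \eqref{eq:MFE}; moreover $\Phi$ is non-empty (choose a maximizing action at each state), convex-valued (a product of faces of simplices), and, by upper hemicontinuity of the $\arg\max$ map (Berge's maximum theorem), has a closed graph.

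Finally, the composition $\bar\pi\mapsto\Phi(\mu(\bar\pi))$ inherits these properties: it sends the compact convex set $\Pi$ to non-empty, compact, convex subsets of $\Pi$ and is upper hemicontinuous, being the continuous map $\mu$ precomposed with the closed-graph correspondence $\Phi$. Kakutani's theorem then yields a $\bar\pi^*$ with $\bar\pi^*\in\Phi(\mu(\bar\pi^*))$, which is the desired MFE. I expect the main obstacle to be the well-posedness and continuity of $\mu$: one must make the causal forward-solvability argument rigorous so that the coupled system \eqref{eq:y-fun}--\eqref{eq:x-func-station} genuinely defines a continuous function of $\bar\pi$. The secondary subtlety, that the naive best-response set fails to be convex, is precisely what forces the use of the refined correspondence $\Phi$ in place of $\arg\max_\pi V_\rho$.
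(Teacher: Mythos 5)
Your proposal follows the same skeleton as the paper's proof---recast \eqref{eq:MFE} as a fixed point of a best-response-composed-with-flow-map correspondence, establish that the flow map $\mu$ is single-valued and continuous, get upper hemicontinuity of the best response from Berge's maximum theorem, and close with Kakutani---but you handle two key steps differently, and both differences are improvements. First, the paper justifies continuity and single-valuedness of $\mu$ by observing that \eqref{eq:y-fun}--\eqref{eq:z-fun} make $(\mathbf{y},\mathbf{z})$ continuous in $\mathbf{x}$ while \eqref{eq:x-func-zone}--\eqref{eq:x-func-station} make $\mathbf{x}$ continuous in $(\mathbf{y},\mathbf{z})$; as stated this is circular, and your causal forward-in-time argument (all travel and charging durations are at least one period, so each flow at time $t$ depends only on strictly earlier flows and the fixed initial condition $x^0_{k,b}=M\rho$) is exactly what makes the well-posedness rigorous. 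Second, and more substantively, the paper takes $\psi$ to be the full set of maximizers of \eqref{eq:value} and treats closed graph as ``the only non-trivial condition'' for Kakutani; but Kakutani also needs convex values, and the naive set $\arg\max_\pi V_\rho(\pi|\mathbf{y},\mathbf{z})$ is genuinely not convex in general---a policy may act arbitrarily at states it renders unreachable, and mixing two such optimal policies can make those states reachable while inheriting suboptimal actions there, strictly lowering the value. Your refined correspondence $\Phi$ (policies whose support at \emph{every} state lies in the $Q^*$-greedy set) is non-empty, convex (a product of simplex faces), upper hemicontinuous, and consists only of optimal policies, so a fixed point of $\Phi\circ\mu$ still satisfies \eqref{eq:MFE}. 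In short, the paper's proof is more compact and defers details to its reference, while yours is self-contained and patches a real gap in the convexity hypothesis of Kakutani's theorem; if anything, your version is the one that should appear in a careful write-up.
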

\begin{proof}
Note that \eqref{eq:MFE} can be written as a fixed point $\bar{\pi}^*\in \phi(\bar{\pi}^*)$, where $\phi$ is a set-valued function and can be decomposed as $\phi=\psi\circ\mu$. Specifically, $\psi$ maps from the aggregate vehicle flow to the set of optimal policies and $\mu$ maps from a mean policy to the aggregate vehicle flows. As shown in \cite{zhang2023ride}, the existence of this fixed point can be proved by Kakutani's fixed point theorem~\cite[Theorem 8.6]{granas2003fixed}. The only non-trivial condition is that $\phi$ has a closed graph. In our setting (i.e., the feasible set of policies $\Omega$ is a compact set in a Hausdorff space), this is equivalent to proving $\phi$ is upper hemicontinuous.  
Following \cite{zhang2023ride}, we first show $\mu$ is single-valued and continuous with $\bar{\pi}$, and then prove $\psi$ is hemicontinuous. 

By~\eqref{eq:y-fun} and \eqref{eq:z-fun}, $\mathbf{y}$ and $\mathbf{z}$ are continuous in $\mathbf{x}=\{x^t_{k,b}\}_{t,k,b}$ and $\bar{\pi}$. In turn, \eqref{eq:x-func-zone} and \eqref{eq:x-func-station} suggest that $\mathbf{x}$ is continuous in $\mathbf{y}$ and $\mathbf{z}$ given that $f$ and $g$ are continuous functions of $(\mathbf{y},\mathbf{z})$. Therefore, $\mu$ is continuous in $\bar{\pi}$ and the mapping is single-valued. 
The remaining task is to show $\psi$ is hemicontinuous. This is done by invoking Berge's maximum theorem~\cite[Theorem 3.5]{hu2013handbook} with observations: (i) $\Omega$ is independent of the aggregate vehicle flows $(\mathbf{y}, \mathbf{z})$, and (ii) the value function \eqref{eq:value} is continuous due to continuous state transition $P(s'|a,s)$. The latter also results from the continuity of $f$ and $g$. 
\end{proof}

% We note that the assumption imposed in Proposition~\ref{prop:exist} often holds. In the online version~\cite{zhang2023eriver}, we show how $f$ and $g$ can be designed to be continuous in $\mathbf{y}$ and $\mathbf{z}$. 
% \revise{It is also worth noting that Proposition~\ref{prop:exist} does not guarantee uniqueness, thus the obtained equilibrium depends largely on the initial point and the solution algorithm (i.e., evolutionary dynamics of the market).}

% (full paper)
We note that the assumption imposed in Proposition~\ref{prop:exist} often holds. First, $f$ can be easily designed to be a continuous function of $\mathbf{y}$ (see Appendix~A). 
Establishing the continuity of waiting time distribution is more challenging, as it depends on a sequence of vehicle flows $\mathbf{z}_{\leq t, l}$. In Appendix~B, we derive a closed-form expression of $g$ and show it is indeed continuous in $\mathbf{z}$.

\section{NUMERICAL ANALYSIS}
\subsection{Settings and solution algorithm}
We analyze the equilibria of eRIVER on a stylized network, shown in Figure~\ref{fig:network}. The default values of exogenous model parameters are reported in Table~\ref{tab:default}.

\begin{figure}[htb]
    \centering
    % \framebox{\parbox{3in}{}}
    \includegraphics[width=0.2\textwidth]{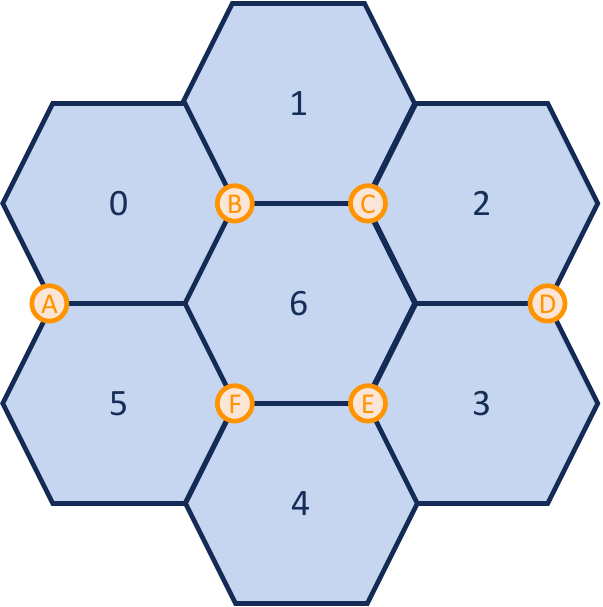}
    \caption{Stylized network with seven service zones $\setN=\{0,1,\dots,6\}$ and six charging stations $\setL=\{A,B,\dots,F\}$.}
    \label{fig:network}
\end{figure}

\begin{table}[htb]
\centering
\caption{Default values of exogenous variables.}\label{tab:default}
\begin{tabular}{|l|l|l||l|l|ll|}
\hline
Notation   & Unit     & Value & Notation     & Unit      & Value  & \\ \hline
$N$        &          & 7     & $\tau_{ij}$  & $\Delta$  & 1      & $j\in\setN_i$  \\ 
$L$        &          & 6     &              &           & 2      & $j\notin\setN_i$\\ 
$T $       &          & 12    & $\tau_{il}$  & $\Delta$  & 1      & $l\in\setL_i$    \\ 
$\Delta$   & hr       & 0.25  &              &           & 2      & $l\in\setL_{\setN_i}$    \\ 
$M$        & veh      & 500   &              &           & 3      & otherwise    \\ 
$B$        & $\Delta$ & 4     &   $p_{ij}$  &  \$       & 1      &   $\tau_{ij}=1$  \\ 
$C$        & veh      & 20    &      &          & 2      & $\tau_{ij}=2$    \\ 
$e$        & $\Delta$ & 3     & $c_l$  &  \$       & 0.5    &      $\l\in\setL$\\ 
% $\alpha_{ij}$   &          & 0.143 & $\kappa$     & \$        &  -10   &   \\ \hline
$\xi$   & $\Delta$   & 1 &  $\kappa$    & \$        &  -10   &   \\ \hline
\multicolumn{7}{|l|}{Note: $\setL_i$ denotes the set of charging stations at the boundary of zone $i$,} \\
\multicolumn{7}{|l|}{and $\setL_{\setN_i}$ denotes the set of charging stations at the boundary of}\\
\multicolumn{7}{|l|}{neighbor zones of zone $i$.} \\ \hline
\end{tabular}
\end{table}

The meeting probability and charging waiting time distribution are specified according to Appendices A and B. 
All vehicles are idle at the beginning of the study horizon and, in the first set of experiments, they are evenly distributed over service zones with a full battery, i.e., $\rho(s_0=(0,i,B)) = 1/N,\;i\in\setN$.

% algorithm
We apply the Frank-Wolfe algorithm~\cite{frank1956algorithm} to compute the equilibrium. In each iteration, we first perform a forward propagation to load the vehicle flows using the current mean policy and then conduct a backward propagation to solve an optimal policy and update the mean policy with it. This iterative procedure is summarized in Algorithm~\ref{algo:equilibrium}. 
% The solution algorithm is coded in Python and available at \url{}

\begin{algorithm}[htb]
\caption{Solution algorithm for eRIVER}\label{algo:equilibrium}
\begin{algorithmic}[1]
\renewcommand{\algorithmicrequire}{\textbf{Input:} }
\renewcommand{\algorithmicensure}{\textbf{Output:}}
\REQUIRE Demand $\{q^t_i\}$; parameters in Tab.~\ref{tab:default}; gap threshold $\delta$
\ENSURE  Equilibrium policy $\bar{\pi}^*$
\\ Initiate random policy $\bar{\pi}^{(0)}$.
\FOR{$n = 0,1,\dots,$}
\STATE \textit{Forward propagation:} Load vehicle flows by \eqref{eq:y-fun} and \eqref{eq:z-fun} using policy $\bar{\pi}^{(n)}$.
\STATE \textit{Backward propagation:} Solve a policy $\hat{\pi}$ that maximizes \eqref{eq:value} by dynamic programming.
\STATE Update policy $\bar{\pi}^{(n+1)} = (1-\eta)\bar{\pi}^{(n)} + \eta \hat{\pi}$ with step size $\eta = 1/(n+1)$.
\STATE Compute gap $g= ||\bar{\pi}^{(n+1)} - \bar{\pi}^{(n)}||_1$.
\IF{$g<\delta$}
    \STATE \textbf{break}
\ENDIF
\ENDFOR
\RETURN $\bar{\pi}^* = \bar{\pi}^{(n)}$
\end{algorithmic}
\end{algorithm}

We compute the eRIVER equilibria for three demand profiles, where the origin-destination (OD) pattern is always assumed to be balanced, i.e., $\alpha^t_{ij}=1/N,\forall i,j,t$. 
\begin{itemize}
    \item \emph{Uniform}: invariant over time and space ($q^t_i=20,\forall t,i$);
    \item \emph{Peak/offpeak}: uniform over space with a temporal pattern shown in Figure~\ref{fig:q_s1};
    \item \emph{Central/peripheral}: uniform over time but concentrated in the central zone ($q^t_i=40$ for $i=6$ otherwise 15). 
\end{itemize}

\begin{figure}[htb]
    \centering
    % \framebox{\parbox{3in}{}}
    \includegraphics[width=0.35\textwidth]{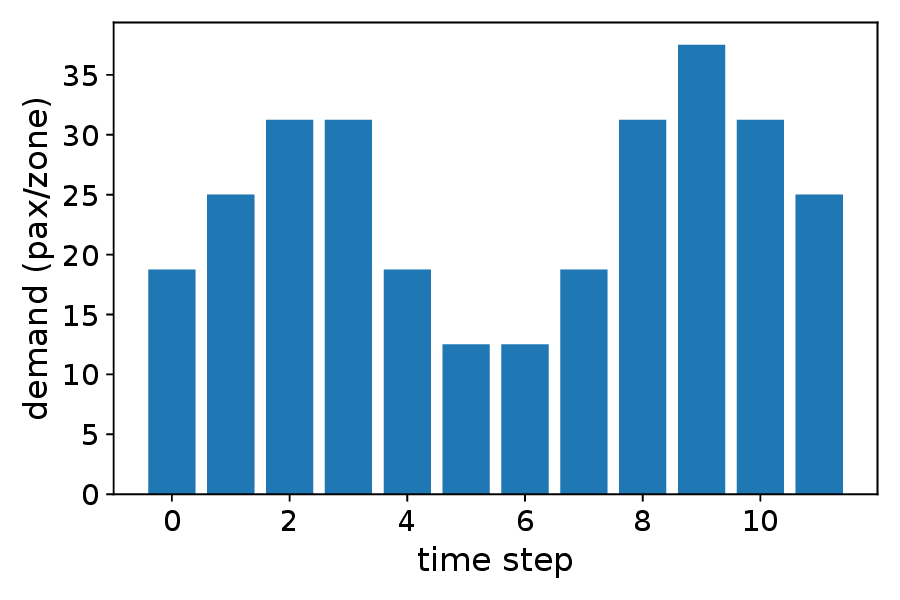}
    \caption{Temporal demand pattern in case Peak/offpeak.}
    \label{fig:q_s1}
\end{figure}

Figure~\ref{fig:gap_s3} illustrates the performance of Algorithm~\ref{algo:equilibrium} in the case of \emph{Peak/offpeak} demand; results in other cases are similar. The policy gap reduces to a magnitude of 10$^{-4}$ within 500 iterations. \revise{Yet, due to the diminishing step, the convergence is sublinear and the value gap (i.e., normalized difference between value function and Q-values with positive vehicle flows) stablized around 10$^{-3}$.}

\begin{figure}[htb]
    \centering
    % \framebox{\parbox{3in}{}}
    \includegraphics[width=0.35\textwidth]{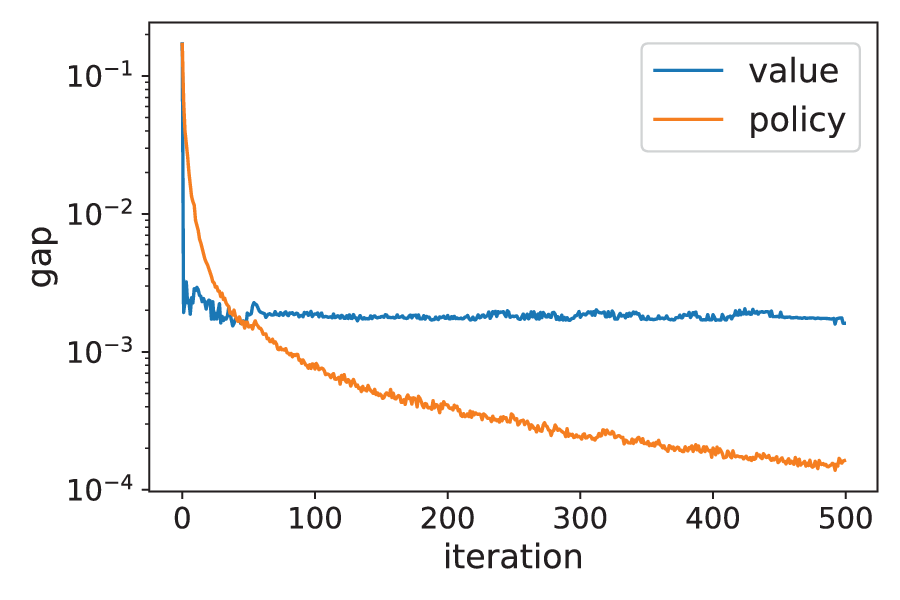}
    \caption{Gap over iteration in case Peak/offpeak.}
    \label{fig:gap_s3}
\end{figure}

\subsection{``Congestion'' in service zones and charging stations}
% main results
Figure~\ref{fig:congestion} plots equilibrium vehicle flows in service zones and at charging stations, normalized by their respective maximum values in each case. The darker the color, the larger the vehicle flows. In all three cases, vehicles show a strong preference for the central zone even though it has the same demand as other zones in the cases of \emph{Uniform} and \emph{Peak/offpeak} demand.
Besides, severe charging queues are observed at time $t=4$ because most vehicles arrive at the charging stations at time $t=3$ and $t=4$.

\begin{figure*}[htb]
    \centering
    \begin{subfigure}{\textwidth}
    \centering
    \includegraphics[width=\textwidth]{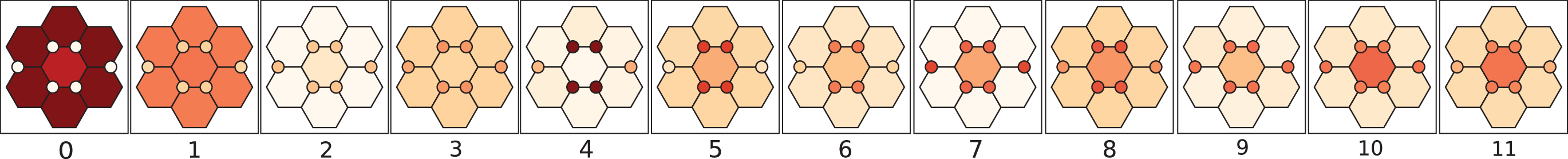}
    \caption{Uniform}
    \end{subfigure}
    \begin{subfigure}{\textwidth}
    \centering
    \includegraphics[width=\textwidth]{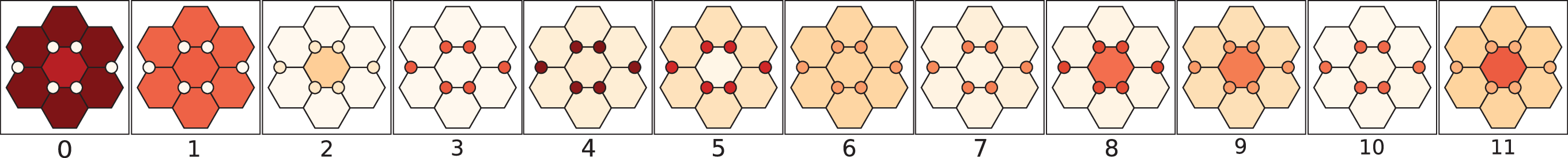}
    \caption{Peak/offpeak}
    \end{subfigure}
    \begin{subfigure}{\textwidth}
    \centering
    \includegraphics[width=\textwidth]{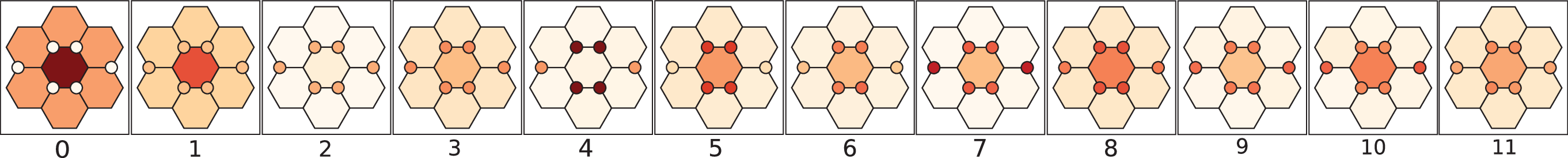}
    \caption{Central/peripheral}
    \end{subfigure}
    \caption{Normalized vehicle flows in service zones and at charging stations for the three demand patterns. The number under each panel indicates its time step.}
    \label{fig:congestion}
\end{figure*}

\begin{figure*}[htb]
    \centering
    \begin{subfigure}{0.3\textwidth}
        \centering
        \includegraphics[width=0.8\textwidth]{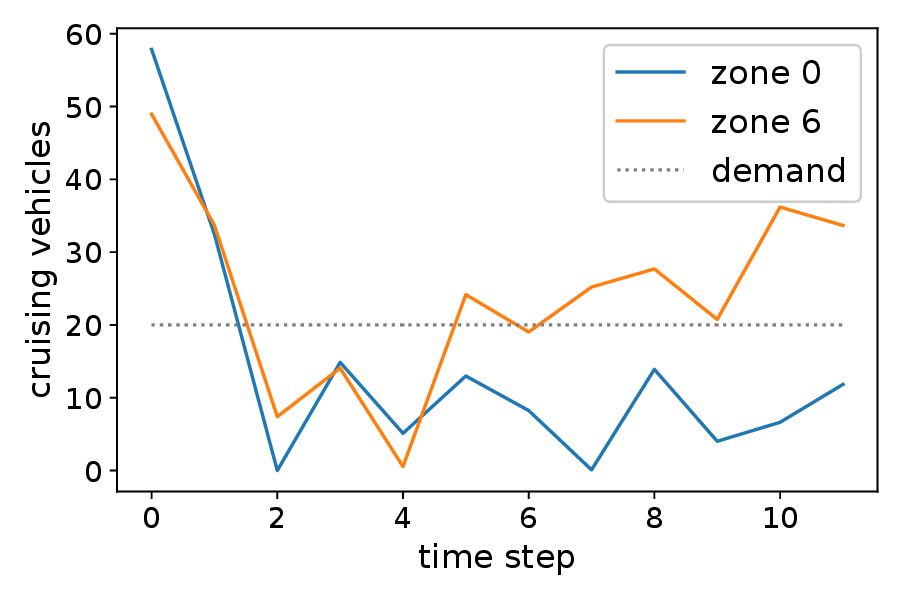}
        \caption{Uniform}
    \end{subfigure}
    \begin{subfigure}{0.3\textwidth}
        \centering
        \includegraphics[width=0.8\textwidth]{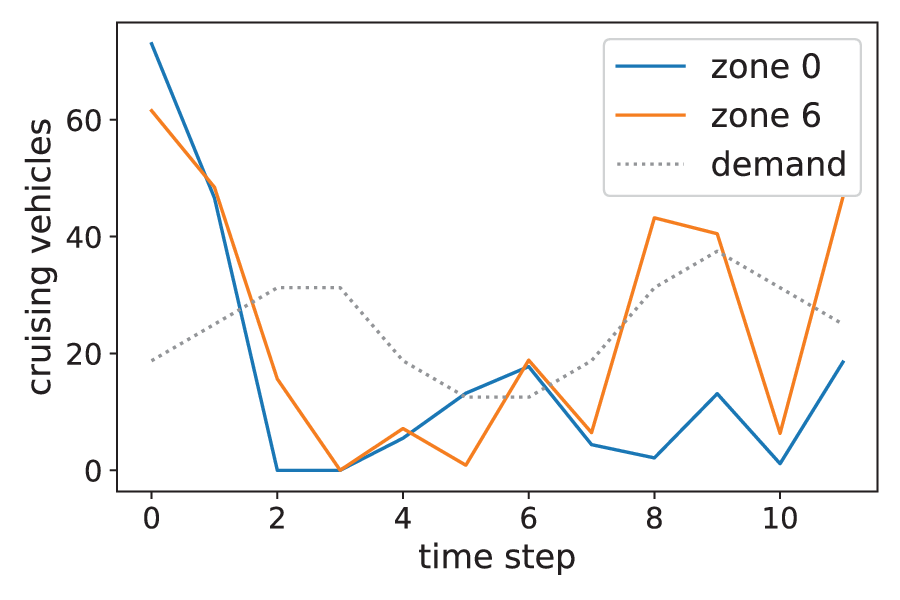}
        \caption{Peak/offpeak}
    \end{subfigure}
    \begin{subfigure}{0.3\textwidth}
        \centering
        \includegraphics[width=0.8\textwidth]{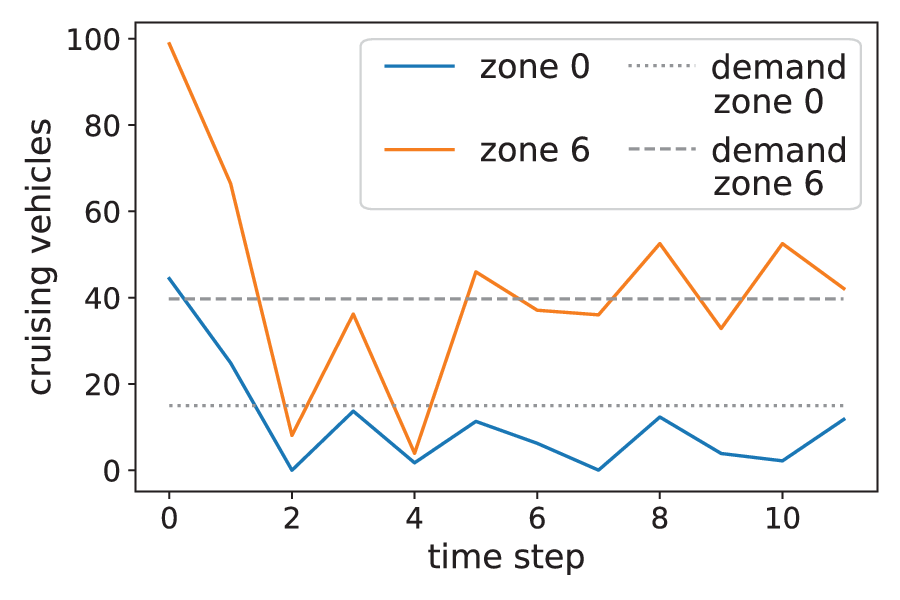}
        \caption{Central/peripheral}
    \end{subfigure}
    \caption{Cruising vehicles in central and peripheral zones.}
    \label{fig:cruise_veh}
\end{figure*}

\begin{figure*}[htb]
    \centering
    \begin{subfigure}{0.3\textwidth}
        \centering
        \includegraphics[width=0.8\textwidth]{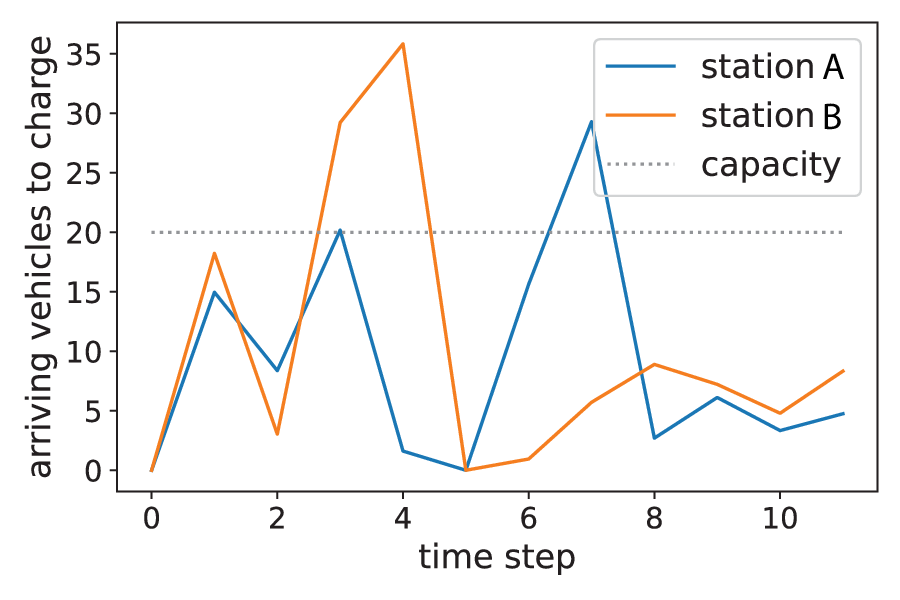}
        \caption{Uniform}
    \end{subfigure}
    \begin{subfigure}{0.3\textwidth}
        \centering
        \includegraphics[width=0.8\textwidth]{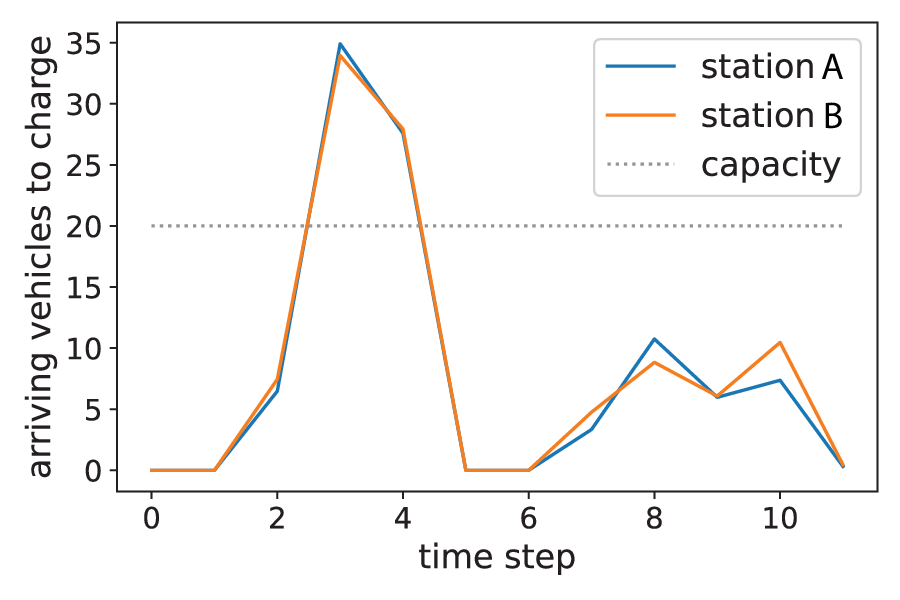}
        \caption{Peak/offpeak}
    \end{subfigure}
    \begin{subfigure}{0.3\textwidth}
        \centering
        \includegraphics[width=0.8\textwidth]{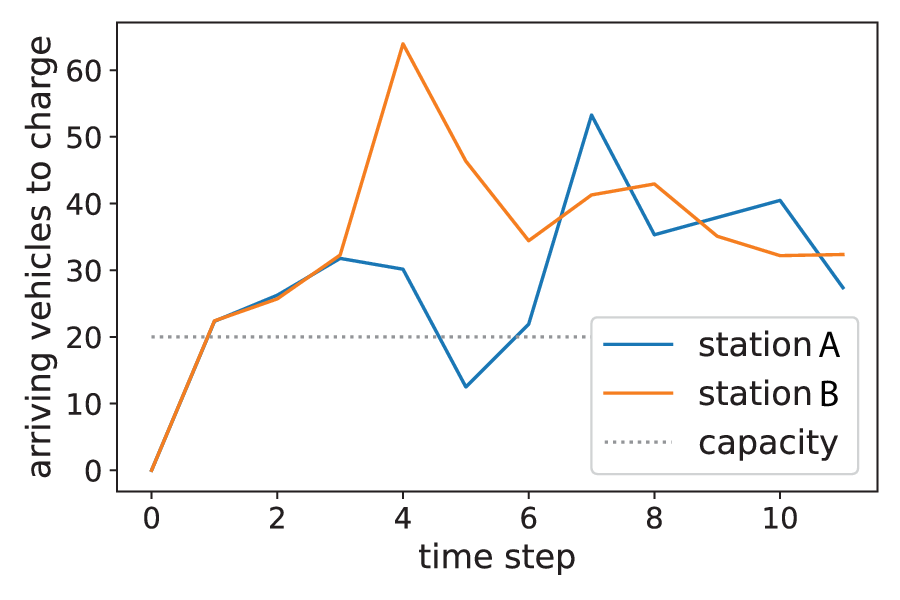}
        \caption{Central/peripheral}
    \end{subfigure}
    \caption{Arriving charging vehicles at inner and outer stations.}
    \label{fig:charge_veh}
\end{figure*}

\begin{figure*}[htb]
    \centering
    \begin{subfigure}{0.3\textwidth}
        \centering
        \includegraphics[width=0.8\textwidth]{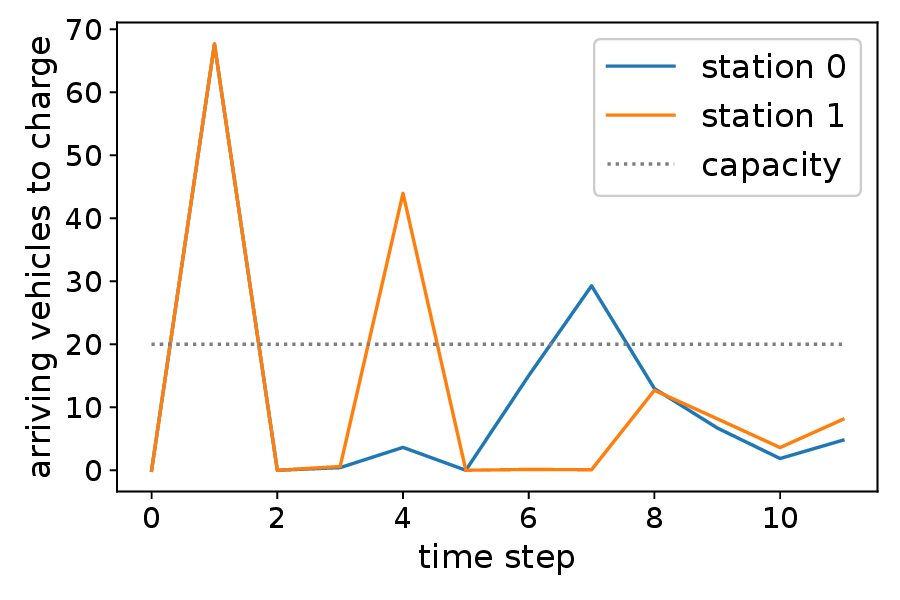}
        \caption{Random initial SOC case 1}
    \end{subfigure}
    \begin{subfigure}{0.3\textwidth}
        \centering
        \includegraphics[width=0.8\textwidth]{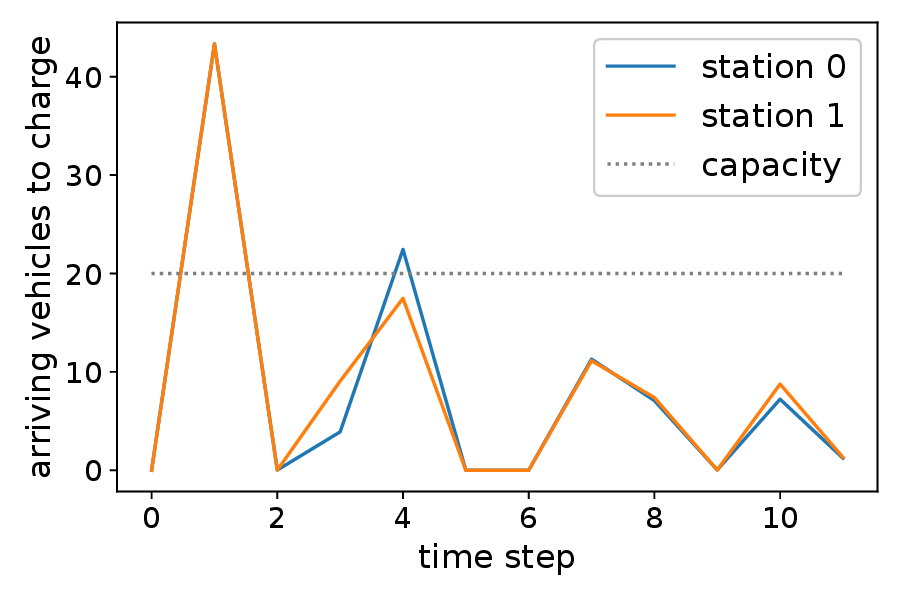}
        \caption{Random initial SOC case 2}
    \end{subfigure}
    \begin{subfigure}{0.3\textwidth}
        \centering
        \includegraphics[width=0.8\textwidth]{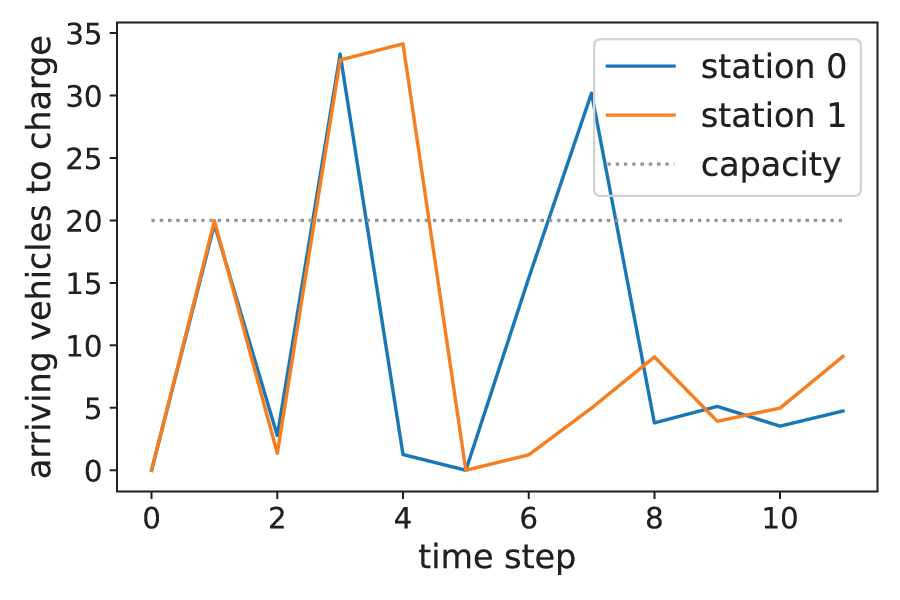}
        \caption{Random initial location}
    \end{subfigure}
    \caption{Arriving charging vehicles at inner and outer stations under different initial state distributions.}
    \label{fig:charge_veh_s0}
\end{figure*}

Thanks to the simple network structure, results in Figure~\ref{fig:congestion} are symmetric. Hence, in what follows, we only compare the vehicle flows between the central zone ($i=6$) and a peripheral zone ($i=0$), and between an inner station ($l=\text{B}$) and an outer station ($l=\text{A}$). 
Figure~\ref{fig:cruise_veh} illustrates the idle vehicles over time along with the demand in each zone represented by the grey dashed line. It can be seen that in the cases of \emph{Uniform} and \emph{Peak/offpeak} demand, a fraction of demand is lost between $t=2$ and $t=4$ because most vehicles are either delivering trips or charging. Yet, this issue is rather minor in \emph{Central/peripheral}. Another observation is that, due to its popularity, the central zone tends to be oversupplied in the second half of the study horizon meanwhile the peripheral zones are undersupplied. This implies that the selfish behaviors of vehicles do not lead to efficient system performance. This is in line with the widely known result in congestion games \cite{roughgarden2007routing}. 

In Figure~\ref{fig:charge_veh}, we plot the arriving vehicles with the station capacity. It clearly illustrates the peak of vehicle arrivals at time $t=3$ and $t=4$, as well as a secondary peak at time $t=7$ in \emph{Uniform} and \emph{Peak/offpeak}. While the first peak is more significant at inner stations, the second one has a more profound impact on outer stations.

\subsection{Impact of initial state distribution}
One possible cause of the severe charging queue at $t=4$ is that we have initiated vehicles with a full battery and thus they all tend to run out of battery around the same time if they continuously operate from the very beginning. Hence, we conduct another experiment with random initial state distributions. Due to the limited space, we only present the results of \emph{Uniform}; the main findings in other cases are similar.

Figure~\ref{fig:charge_veh_s0} presents the arriving vehicle flows in two random samples of initial SOC and one random sample of initial locations. It can be seen that the congestion at the charging stations persists and is more sensitive to initial SOC compared to locations. In the two tested cases, the highest peak happens earlier at time $t=1$, largely because there are more vehicles starting the day with a partial state of charge. Considering that the market has a much larger supply at the beginning (as all vehicles are idle at time $t=0$), these vehicles tend to charge at early time steps.

\section{CONCLUSIONS}
This paper presents a game-theoretical model for the routing and charging behaviors of electric ride-hailing vehicles. Each vehicle's decisions are characterized by a Markov decision process (MDP). 
We show the state transitions in this MDP depend on the aggregate vehicle flows, which are induced by the mean policy over all vehicles. Accordingly, the collective behaviors can be cast in a mean-field Markov game. We define the equilibrium, prove its existence and investigate the equilibrium vehicle flows through numerical experiments. 
The results demonstrate the inefficiency of selfish routing and charging decisions, which causes significant congestion both in service zones and at charging stations. This indicates a necessity to design appropriate control and intervention schemes to optimize the coupled system of on-demand mobility service and electric vehicle charging. 
\revise{Moreover, vehicle travel times within and between zones have assumed to be fixed, which may deviate from the reality. Hence, besides ``congestion'' in the matching and charging processes, future studies shall also take traffic congestion caused by ride-hailing vehicles into consideration.}

% (full paper)
\section*{APPENDIX}

\subsection{Specification of meeting probability}
We apply the meeting probability of e-hailing derived in \cite{zhang2023ride} as follows:
\begin{align}
    m^t_i = 
    \begin{cases}
    1-\exp[-\theta^t_{i,1}(\varphi^t_i)^2], & \varphi^t_i > \tilde{\varphi},\\
    1-\exp(-\theta^t_{i,2}\varphi^t_i), & \varphi^t_i \leq \tilde{\varphi},
    \end{cases}
\end{align}
where $\varphi^t_i = q^t_i/y^t_i$ denotes the demand-supply ratio in the local market and $\tilde{\varphi}$ is a threshold value that dictates the oversupplied market condition. The values of  $\theta^t_{i,1}$, $\theta^t_{i,2}$ and $\tilde{\varphi}$ are set according to \cite{zhang2023ride}, which are calibrated from agent-based simulations.

\subsection{Specification of waiting time distribution}\label{appdx:wait-dist}
To specify the probability of waiting time at station $l$ for any vehicle arriving at time $t$, we introduce another intermediate variable $\zeta^t_l(\omega)$ to denote the available charging spots at time $t+\omega$. We again treat $\zeta^t_l\in[0,C]$ as a continuous variable. 
Then, \eqref{eq:meet-prob} is rewritten as
\begin{align}
    w^t_l(\omega) = \min\left(\frac{\zeta^t_l(\omega)}{z^t_l+\varepsilon}, \max\left(0,1-\sum_{\omega'<\omega} w^t_l(\omega')\right)\right),\label{eq:wait-dist-fun}
\end{align}
where $z^t_l = \sum_b z^t_{l,b}$ is the total vehicle arrivals at time $t$, $w^t_l(\omega')=0$ for $\omega'<0$, and $\varepsilon>0$ is an arbitrarily small constant to ensure tractable results. 

Let us first investigate the continuity of \eqref{eq:wait-dist-fun} with respect to vehicle flows $z^t_{l,b}$. Note $\zeta^t_l(\omega)$ is independent of $z^t_{l,b}$ for $\omega=0$ but not necessarily for $\omega>0$ because newly arrived vehicles that finish waiting would take some charging spots. Specifically, after \eqref{eq:wait-dist-fun} is called to compute $w^t_l(\omega)$, $\zeta^t_l(\omega+1)$ will be updated based on the average charging time $\bar{\tau}^t_l = \sum_b z^t_{l,b}\hat{\tau}_b/\sum_b z^t_{l,b}$. The new value of $\zeta^t_l(\omega+1)$ will then be used to compute  $w^t_l(\omega+1)$.
Nevertheless, $\zeta^t_l(\omega)$ remains continuous with $z^t_{l,b}$ as long as $\zeta^t_l(0)$ is continuous in $z^t_{l,b}$, which is guaranteed given the reduced expression $\min\left(\frac{\zeta^t_l(0)}{z^t_l+\varepsilon}, 1\right)$. Therefore, it is safe to conclude that $w^t_l(\omega)$ is continuous in $z^t_{l,b}$, which is numerically demonstrated in Figure~\ref{fig:wait_dist_continuity}.

\begin{figure}[htb]
    \centering
    % \framebox{\parbox{3in}{}}
    \includegraphics[width=0.3\textwidth]{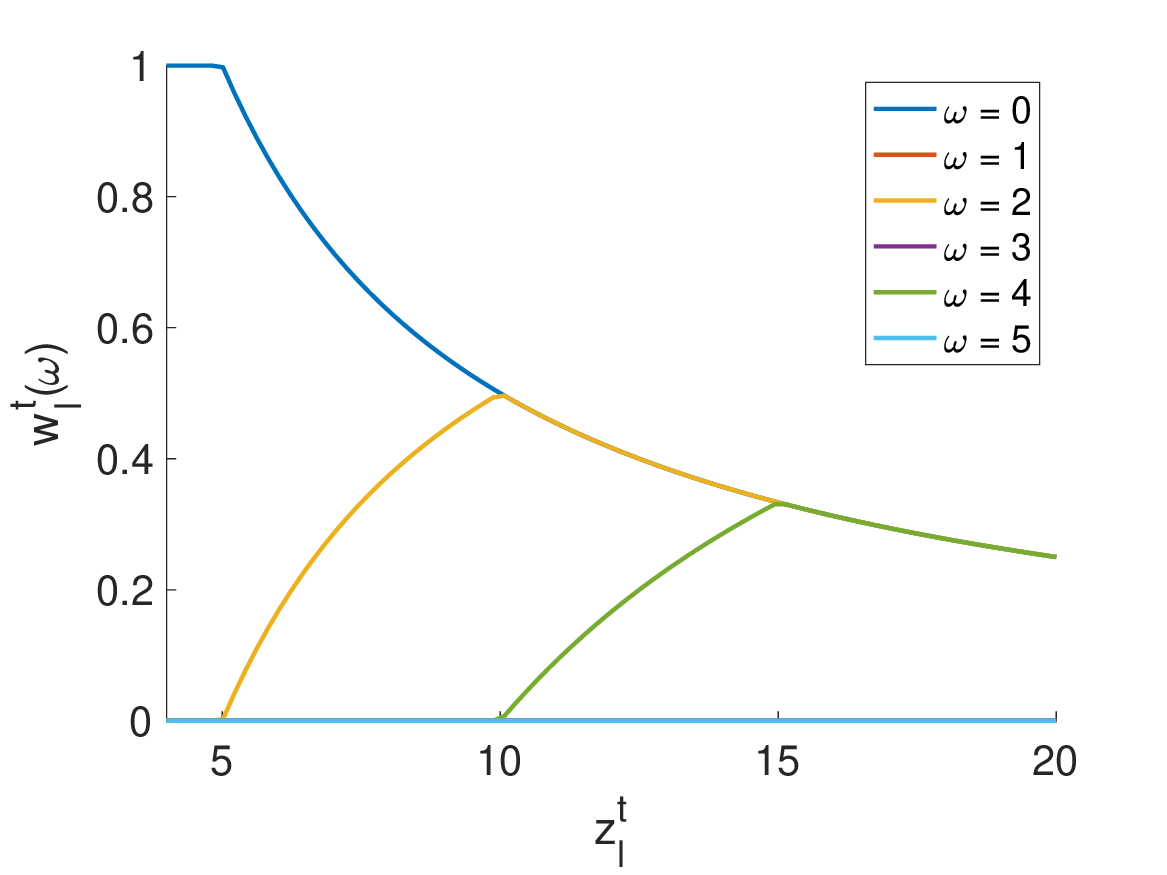}
    \caption{Illustration of continuity of $w^t_l(\omega)$ with respect to $z^t_l$. The initial $\zeta^t_l(\omega)$ is set to be 5 for all time steps; $\bar{\tau}^t_l=1$.}
    \label{fig:wait_dist_continuity}
\end{figure}

Since the available charging spots in the future jointly depend on $z^t_l$, $w^t_l(\omega)$ and $\bar{\tau}^t_l$ in a continuous way, it is easy to show $\zeta^t_l(\omega)$ is continuous in $\mathbf{z}_{\leq t, l}$ by induction. 
This result is numerically demonstrated in Figure~\ref{fig:empty_spot_continuity}.

\begin{figure}[htb]
    \centering
    % \framebox{\parbox{3in}{}}
    \includegraphics[width=0.3\textwidth]{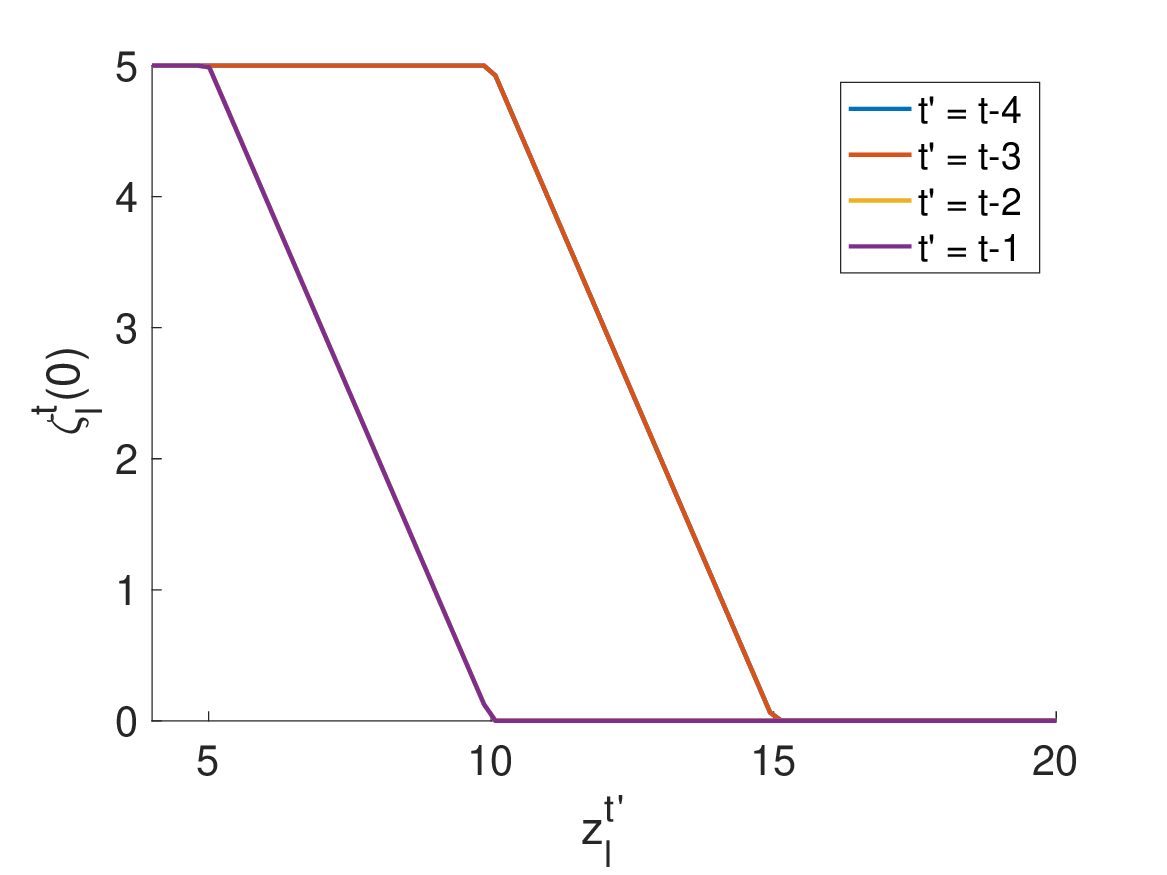}
    \caption{Illustration of continuity of $\zeta^t_l(0)$ with respect to $z^{t'}_l$. The initial $\zeta^t_l(\omega)$ is set to be 5 for all time steps; $\bar{\tau}^t_l=1$; $z^{t''}_l$ is set to zero for $t''\neq t'$.}
    \label{fig:empty_spot_continuity}
\end{figure}

\addtolength{\textheight}{-12cm}   % This command serves to balance the column lengths
                                  % on the last page of the document manually. It shortens
                                  % the textheight of the last page by a suitable amount.
                                  % This command does not take effect until the next page
                                  % so it should come on the page before the last. Make
                                  % sure that you do not shorten the textheight too much.

%%%%%%%%%%%%%%%%%%%%%%%%%%%%%%%%%%%%%%%%%%%%%%%%%%%%%%%%%%%%%%%%%%%%%%%%%%%%%%%%

%%%%%%%%%%%%%%%%%%%%%%%%%%%%%%%%%%%%%%%%%%%%%%%%%%%%%%%%%%%%%%%%%%%%%%%%%%%%%%%%

%%%%%%%%%%%%%%%%%%%%%%%%%%%%%%%%%%%%%%%%%%%%%%%%%%%%%%%%%%%%%%%%%%%%%%%%%%%%%%%%

% \section*{ACKNOWLEDGMENT}

% The preferred spelling of the word ÒacknowledgmentÓ in America is without an ÒeÓ after the ÒgÓ. Avoid the stilted expression, ÒOne of us (R. B. G.) thanks . . .Ó  Instead, try ÒR. B. G. thanksÓ. Put sponsor acknowledgments in the unnumbered footnote on the first page.

%%%%%%%%%%%%%%%%%%%%%%%%%%%%%%%%%%%%%%%%%%%%%%%%%%%%%%%%%%%%%%%%%%%%%%%%%%%%%%%%

% References are important to the reader; therefore, each citation must be complete and correct. If at all possible, references should be commonly available publications.

\bibliographystyle{IEEEtran}
\bibliography{eriver_full}

\end{document}